\newcommand{\be}{\begin{enumerate}}
\newcommand{\ee}{\end{enumerate}}
\newcommand{\bi}{\begin{itemize}}
\newcommand{\ei}{\end{itemize}}
\newcommand{\bc}{\begin{center}}
\newcommand{\ec}{\end{center}}
\newcommand{\bsp}{\begin{sloppypar}}
\newcommand{\esp}{\end{sloppypar}}
\newtheorem{thm}{Theorem}[subsection]
\newtheorem{prop}[thm]{Proposition}
\newtheorem{rem}[thm]{Remark}
\newcommand{\sC}{\mbox{$\cal C$}}
\newcommand{\sD}{\mbox{$\cal D$}}
\newcommand{\sE}{\mbox{$\cal E$}}
\newcommand{\sM}{\mbox{$\cal M$}}
\newcommand{\sT}{\mbox{$\cal T$}}
\newcommand{\sV}{\mbox{$\cal V$}}
\renewcommand{\phi}{\varphi}
\newcommand{\churchqe}{$\mbox{\sc ctt}_{\rm qe}$}
\newcommand{\qzero}{${\cal Q}_0$}
\newcommand{\qzerouqe}{${\cal Q}^{\rm uqe}_{0}$}
\newcommand{\set}[1]{{\{ #1 \}}}
\newcommand{\sembrack}[1]{\llbracket#1\rrbracket}
\newcommand{\synbrack}[1]{\ulcorner#1\urcorner}
\newcommand{\commabrack}[1]{\lfloor#1\rfloor}
\newcommand{\mname}[1]{\mbox{\sf #1}}
\newcommand{\mdot}{\mathrel.}
\newcommand{\tarrow}{\rightarrow}
\newcommand{\LambdaApp}{\lambda\,}
\newcommand{\Neg}{\neg}
\newcommand{\Andd}{\wedge}
\newcommand{\Implies}{\supset}
\newcommand{\Or}{\vee}
\newcommand{\ForallApp}{\forall\,}
\newcommand{\ForsomeApp}{\exists\,}
\newcommand{\TRUE}{\mbox{{\sc t}}}
\newcommand{\FALSE}{\mbox{{\sc f}}}
\title{Incorporating Quotation and Evaluation into Church's Type
  Theory:\\ Syntax and Semantics\thanks{Published in: M. Kohlhase et
    al., eds, \emph{Intelligent Computer Mathematics (CICM 2016)},
    \emph{Lecture Notes in Computer Science}, Vol.~9791, pp.~83--98,
    Springer, 2016. The final publication is available at Springer via
    http://dx.doi.org/10.1007/978-3-319-42547-4\_7. This research was
    supported by NSERC.}}
\author{William M. Farmer}
\institute{%
Computing and Software, McMaster University, Canada\\
\email{wmfarmer@mcmaster.ca}\\[1.5ex]
27 July 2016
}
\begin{document}

\maketitle

\begin{abstract}
{\churchqe} is a version of Church's type theory that includes
quotation and evaluation operators that are similar to quote and eval
in the Lisp programming language.  With quotation and evaluation it is
possible to reason in {\churchqe} about the interplay of the syntax
and semantics of expressions and, as a result, to formalize
syntax-based mathematical algorithms.  We present the syntax and
semantics of {\churchqe} and give several examples that illustrate the
usefulness of having quotation and evaluation in {\churchqe}.  We do
not give a proof system for {\churchqe}, but we do sketch what a proof
system could look like.
\end{abstract}

\iffalse
\textbf{Keywords:} Church's type theory, metareasoning, reflection,
quotation, evaluation, quasiquotation, schemas, syntax-based
mathematical algorithms, meaning formulas, substitution.
\fi
    
\section{Introduction}

The Lisp programming language is famous for its use of
\emph{quotation} and \emph{evaluation}.  From code the Lisp quotation
operator called \emph{quote} produces meta-level data (i.e.,
S-expressions) that represents the code, and from this data the Lisp
evaluation operator called \emph{eval} produces the code that the data
represents.  In Lisp, \emph{metaprogramming} (i.e., programming at the
meta-level) is performed by manipulating S-expressions and is
\emph{reflected} (i.e., integrated) into object-level programming by
the use of quote and eval.

Metaprogramming with reflection is a very powerful programming tool.
Besides Lisp, several other programming languages employ quotation and
evaluation mechanisms to enable metaprogramming with reflection.
Examples include Agda~\cite{Norell07,Norell09}, Archon~\cite{Stump09},
Elixir~\cite{Elixir15}, F\#~\cite{FSharp15},
MetaML~\cite{TahaSheard00}, MetaOCaml~\cite{MetaOCaml11},
reFLect~\cite{GrundyEtAl06}, and Template
Haskell~\cite{SheardJones02}.

\bsp Analogous to metaprogramming in a programming language,
\emph{metareasoning} is performed in a logic by manipulating
meta-level values (e.g., syntax trees) that represent expressions in
the logic and is \emph{reflected} into object-level reasoning using
quotation and evaluation\footnote{Evaluation in this context is also
  called unquoting, interpretation, dereferencing, and dereification.}
mechanisms~\cite{Costantini02}.  In proof assistants like Coq and
Agda, metareasoning with reflection is implemented in the logic by
defining an infrastructure consisting of (1) an \emph{inductive type
  of syntactic values} that represent certain object-level
expressions, (2) an \emph{informal quotation operator} that maps these
object-level expressions to syntactic values, and (3) a \emph{formal
  evaluation operator} that maps syntactic values to the values of the
object-level expressions that they
represent~\cite{Chlipala13,GonthierEtAl15,VanDerWaltSwierstra12}.
Metareasoning with reflection is used for formalizing metalogical
techniques and incorporating symbolic computation into proof
assistants~\cite{Chlipala13,Farmer13,GonthierEtAl15,Harrison95,VanDerWaltSwierstra12}.
\esp

The metareasoning and reflection infrastructures that have been
employed in today's proof assistants are \emph{local} in the
sense that the syntactic values of the inductive type represent only a
subset of the expressions of the logic, the quotation operator can
only be applied to these expressions, and the evaluation operator can
only be applied to the syntactic values of the inductive type.  Can
metareasoning with reflection be implemented in a traditional logic
like first-order logic or simple type theory using a \emph{global}
infrastructure with quotation and evaluation operators like Lisp's
quote and eval?  This is largely an open question.  As far as we know,
there is no readily implementable version of a traditional logic that
admits global quotation and evaluation.  We have proposed a version of
NBG set theory named Chiron~\cite{FarmerArxiv13} and a version of
Alonzo Church's type theory~\cite{Church40}\footnote{Church's type
  theory is a version of simple type theory with lambda notation.}
named {\qzerouqe}~\cite{FarmerArxiv14} that include global quotation
and evaluation operators, but these logics have a high level of
complexity and are not easy to implement.

Many challenging problems face the logic engineer who seeks to
incorporate global quotation and evaluation into a traditional logic.
The three problems that most concern us are the following.  We will
write the quotation and evaluation operators applied to an expression
$e$ as $\synbrack{e}$ and $\sembrack{e}$, respectively.

\be

  \item \emph{Evaluation Problem.}  An evaluation operator is
    applicable to syntactic values that represent formulas and thus is
    effectively a truth predicate.  Hence, by the proof of Alfred
    Tarski's theorem on the undefinability of truth~\cite{Tarski35a},
    if the evaluation operator is total in the context of a
    sufficiently strong theory like first-order Peano arithmetic, then
    it is possible to express the liar paradox using the quotation and
    evaluation operators.  Therefore, the evaluation operator must be
    partial and the law of disquotation cannot hold universally (i.e.,
    for some expressions $e$, $\sembrack{\synbrack{e}} \not= e$).  As
    a result, reasoning with evaluation is cumbersome and leads to
    undefined expressions.

\iffalse
Tarski33,Tarski35
\fi

  \item \emph{Variable Problem.}  The variable $x$ is not free in the
    expression $\synbrack{x + 3}$ (or in any quotation).  However, $x$
    is free in $\sembrack{\synbrack{x + 3}}$ because
    $\sembrack{\synbrack{x + 3}} = x + 3$.  If the value of a constant
    $c$ is $\synbrack{x + 3}$, then $x$ is free in $\sembrack{c}$
    because $\sembrack{c} = \sembrack{\synbrack{x + 3}} = x + 3$.
    Hence, in the presence of an evaluation operator, whether or not a
    variable is free in an expression may depend on the values of the
    expression's components.  As a consequence, the substitution of an
    expression for the free occurrences of a variable in another
    expression depends on the semantics (as well as the syntax) of the
    expressions involved and must be integrated with the proof system
    of the logic.  That is, a logic with quotation and evaluation
    requires a semantics-dependent form of substitution in which side
    conditions, like whether a variable is free in an expression, are
    proved within the proof system.  This is a major departure from
    traditional logic.

  \item \emph{Double Substitution Problem.}  By the semantics of
    evaluation, the value of $\sembrack{e}$ is the \emph{value} of the
    expression whose syntax tree is represented by the \emph{value} of
    $e$.  Hence the semantics of evaluation involves a double
    valuation (see condition 6 of the definition of a model in
    section~\ref{subsec:models}).  If the value of a variable $x$ is
    $\synbrack{x}$, then $\sembrack{x} = \sembrack{\synbrack{x}} = x =
    \synbrack{x}$.  Hence the substitution of $\synbrack{x}$ for $x$
    in $\sembrack{x}$ requires one substitution inside the argument of
    the evaluation operator and another substitution after the
    evaluation operator is eliminated.  This double substitution is
    another major departure from traditional logic.

\ee

{\churchqe} is a version of Church's type theory~\cite{Church40} with
quotation and evaluation that overcomes these three problems.  It is
much simpler than {\qzerouqe} since (1) the quotation operator can
only be applied to expressions that do not contain the evaluation
operator and (2) substitution is not a logical constant (applied to
syntactic values).  Like {\qzerouqe}, {\churchqe} is based on
{\qzero}~\cite{Andrews02}, Peter Andrews' version of Church's type
theory.  In this paper, we present the syntax and semantics of
{\churchqe} and give several examples that illustrate the usefulness
of having quotation and evaluation in {\churchqe}.  We do not give a
proof system for {\churchqe}, but we do sketch what a proof system
could look like.

\section{Syntax}

The syntax of {\churchqe} is very similar to the syntax of
{\qzero}~\cite[pp.~210--211]{Andrews02}.  {\churchqe} has the syntax
of Church's type theory plus an inductive type of syntactic values, a
quotation operator, and a typed evaluation operator.  Like {\qzero},
the propositional connectives and quantifiers are defined using
function application, function abstraction, and equality.  For the
sake of simplicity, {\churchqe} does not contain, as in {\qzero}, a
definite description operator or, as in the logic of
HOL~\cite{GordonMelham93}, an indefinite description (choice)
operator or type variables.

\subsection{Types}

A \emph{type} of {\churchqe} is a string of symbols defined inductively by
the following formation rules:
\be

  \item \emph{Type of individuals}: $\iota$ is a type.

  \item \emph{Type of truth values}: $\omicron$ is a type.

  \item \emph{Type of constructions}: $\epsilon$ is a type.

  \item \emph{Function type}: If $\alpha$ and $\beta$ are types, then
    $(\alpha \tarrow \beta)$ is a type.\footnote{In Andrews'
    {\qzero}~\cite{Andrews02} and Church's original
    system~\cite{Church40}, the function type $(\alpha \tarrow \beta)$
    is written as $(\beta\alpha)$.}

\ee
Let $\sT$ denote the set of types of {\churchqe}.
$\alpha,\beta,\gamma, \ldots$ are syntactic variables ranging over
types.  When there is no loss of meaning, matching pairs of
parentheses in types may be omitted.  We assume that function type
formation associates to the right so that a type of the form $(\alpha
\tarrow (\beta \tarrow \gamma)) $ may be written as $\alpha \tarrow
\beta \tarrow \gamma$.

We will see in the next section that in {\churchqe} types are
directly assigned to variables and constants and thereby indirectly
assigned to expressions.

\subsection{Expressions}

A \emph{typed symbol} is a symbol with a subscript from $\sT$.  Let
$\sV$ be a set of typed symbols such that, for each $\alpha \in \sT$,
$\sV$ contains denumerably many typed symbols with subscript~$\alpha$.
A \emph{variable of type $\alpha$} of {\churchqe} is a member of $\sV$
with subscript $\alpha$.  $\textbf{f}_\alpha, \textbf{g}_\alpha,
\textbf{h}_\alpha, \textbf{u}_\alpha, \textbf{v}_\alpha,
\textbf{w}_\alpha,\textbf{x}_\alpha, \textbf{y}_\alpha,
\textbf{z}_\alpha,\ldots$ are syntactic variables ranging over
variables of type~$\alpha$.  We will assume that $f_\alpha, g_\alpha,
h_\alpha, u_\alpha, v_\alpha, w_\alpha, x_\alpha, y_\alpha,
z_\alpha,\ldots$ are actual variables of type~$\alpha$ of {\churchqe}.

\iffalse
Let $\sC$ be a set of typed symbols disjoint from $\sV$.  A
\emph{constant of type $\alpha$} of {\churchqe} is a member of $\sC$
with subscript~$\alpha$.  $\sC$ includes the following \emph{logical
  constants} of {\churchqe}: $\mname{=}_{\alpha \tarrow \alpha \tarrow
  o}$, $\mname{is-var}_{\epsilon \tarrow o}$,
$\mname{is-con}_{\epsilon \tarrow o}$, $\mname{app}_{\epsilon \tarrow
  \epsilon \tarrow \epsilon}$, $\mname{abs}_{\epsilon \tarrow \epsilon
  \tarrow \epsilon}$, $\mname{quo}_{\epsilon \tarrow \epsilon}$,
$\mname{is-expr}_{\epsilon \tarrow o}^{\alpha}$ for all $\alpha \in
\sT$.  $\textbf{c}_\alpha$ is a syntactic variable ranging over
constants of type~$\alpha$.
\fi

Let $\sC$ be a set of typed symbols disjoint from $\sV$ that includes
the typed symbols in Table~\ref{tab:log-con}.  A \emph{constant of
  type~$\alpha$} of {\churchqe} is a member of $\sC$ with
subscript~$\alpha$.  The typed symbols in Table~\ref{tab:log-con} are
the \emph{logical constants} of {\churchqe}.  $\textbf{c}_\alpha,
\textbf{d}_\alpha, \ldots$ are syntactic variables ranging over
constants of type~$\alpha$.

\begin{table}
\bc
\begin{tabular}{|ll|}
\hline
$\mname{=}_{\alpha \tarrow \alpha \tarrow o}$ 
& for all $\alpha \in \sT$\\
$\mname{is-var}_{\epsilon \tarrow o}$
&\\
$\mname{is-con}_{\epsilon \tarrow o}$
&\\
$\mname{app}_{\epsilon \tarrow \epsilon \tarrow \epsilon}$
&\\
$\mname{abs}_{\epsilon \tarrow \epsilon \tarrow \epsilon}$
&\\
$\mname{quo}_{\epsilon \tarrow \epsilon}$
&\\
$\mname{is-expr}_{\epsilon \tarrow o}^{\alpha}$
& for all $\alpha \in \sT$\\
\hline
\end{tabular}
\ec
\caption{Logical Constants}\label{tab:log-con}
\end{table}

An \emph{expression of type $\alpha$} of {\churchqe} is a string of
symbols defined inductively by the formation rules below.
$\textbf{A}_\alpha, \textbf{B}_\alpha, \textbf{C}_\alpha, \ldots$ are
syntactic variables ranging over expressions of type $\alpha$.  An
expression is \emph{eval-free} if it is constructed using just the
first five formation rules.
\be

  \item \emph{Variable}: $\textbf{x}_\alpha$ is an expression of type
    $\alpha$.

  \item \emph{Constant}: $\textbf{c}_\alpha$ is an expression of type
    $\alpha$.

  \item \emph{Function application}: $(\textbf{F}_{\alpha \tarrow
    \beta} \, \textbf{A}_\alpha)$ is an expression of type $\beta$.

  \item \emph{Function abstraction}: $(\LambdaApp \textbf{x}_\alpha
    \mdot \textbf{B}_\beta)$ is an expression of type $\alpha \tarrow
    \beta$.

  \item \emph{Quotation}: $\synbrack{\textbf{A}_\alpha}$ is an
    expression of type $\epsilon$ if $\textbf{A}_\alpha$ is eval-free.

  \item \emph{Evaluation}: $\sembrack{\textbf{A}_\epsilon}_{{\bf
      B}_\beta}$ is an expression of type $\beta$.

\ee 

\noindent
The purpose of the second component $\textbf{B}_\beta$ in an
evaluation $\sembrack{\textbf{A}_\epsilon}_{{\bf B}_\beta}$ is to
establish the type of the evaluation.  A \emph{formula} is an
expression of type $o$.  When there is no loss of meaning, matching
pairs of parentheses in expressions may be omitted.  We assume that
function application formation associates to the left so that an
expression of the form $((\textbf{G}_{\alpha \tarrow \beta \tarrow
  \gamma} \, \textbf{A}_\alpha) \, \textbf{B}_\beta)$ may be written
as $\textbf{G}_{\alpha \tarrow \beta \tarrow \gamma} \,
\textbf{A}_\alpha \, \textbf{B}_\beta$.

\subsection{Constructions}

A \emph{construction} of {\churchqe} is an expression of type
$\epsilon$ defined inductively as follows:

\be

  \item $\synbrack{\textbf{x}_\alpha}$ is a construction.

  \item $\synbrack{\textbf{c}_\alpha}$ is a construction.

  \item If $\textbf{A}_\epsilon$ and $\textbf{B}_\epsilon$ are
    constructions, then $\mname{app}_{\epsilon \tarrow \epsilon
      \tarrow \epsilon} \, \textbf{A}_\epsilon \,
    \textbf{B}_\epsilon$, $\mname{abs}_{\epsilon \tarrow \epsilon
      \tarrow \epsilon} \, \textbf{A}_\epsilon \,
    \textbf{B}_\epsilon$, and $\mname{quo}_{\epsilon \tarrow \epsilon}
    \, \textbf{A}_\epsilon$ are constructions.

\ee

\noindent
The set of constructions is thus an inductive type whose base elements
are quotations of variables and constants and whose constructors are
$\mname{app}_{\epsilon \tarrow \epsilon \tarrow \epsilon}$,
$\mname{abs}_{\epsilon \tarrow \epsilon \tarrow \epsilon}$, and
$\mname{quo}_{\epsilon \tarrow \epsilon}$.  We will call these three
constants \emph{syntax constructors}.

Let $\sE$ be the function mapping eval-free expressions to
constructions that is defined inductively as follows:

\be

  \item $\sE(\textbf{x}_\alpha) = \synbrack{\textbf{x}_\alpha}$.

  \item $\sE(\textbf{c}_\alpha) = \synbrack{\textbf{c}_\alpha}$.

  \item $\sE(\textbf{F}_{\alpha \tarrow \beta} \, \textbf{A}_\alpha) =
    \mname{app}_{\epsilon \tarrow \epsilon \tarrow \epsilon} \,
    \sE(\textbf{F}_{\alpha \tarrow \beta}) \, \sE(\textbf{A}_\alpha)$.

  \item $\sE(\LambdaApp \textbf{x}_\alpha \mdot \textbf{B}_\beta) =
    \mname{abs}_{\epsilon \tarrow \epsilon \tarrow \epsilon} \,
    \sE(\textbf{x}_\alpha) \, \sE(\textbf{B}_\beta)$.

  \item $\sE(\synbrack{\textbf{A}_\alpha}) = \mname{quo}_{\epsilon
    \tarrow \epsilon} \, \sE(\textbf{A}_\alpha)$.

\ee

\noindent
$\sE$ is clearly injective.  When $\textbf{A}_\alpha$ is eval-free,
$\sE(\textbf{A}_\alpha)$ is a construction that represents the syntax
tree of $\textbf{A}_\alpha$.  That is, $\sE(\textbf{A}_\alpha)$ is a
syntactic value that represents how $\textbf{A}_\alpha$ is
syntactically constructed.  For every eval-free expression, there is a
construction that represents its syntax tree, but not every
construction represents the syntax tree of an eval-free expression.
For example, $\mname{app}_{\epsilon \tarrow \epsilon \tarrow \epsilon}
\, \synbrack{\textbf{x}_\alpha} \, \synbrack{\textbf{x}_\alpha}$
represents the syntax tree of $(\textbf{x}_\alpha \,
\textbf{x}_\alpha)$ which is not an expression of {\churchqe} since
the types are mismatched.  A construction is \emph{proper} if it is in
the range of $\sE$, i.e., it represents the syntax tree of an
eval-free expression.

The five kinds of eval-free expressions and the syntactic values that
represent their syntax trees are given in
Table~\ref{tab:eval-free-exprs}.  

\iffalse 

The constants $\mname{is-var}_{\epsilon \tarrow o}$,
$\mname{is-con}_{\epsilon \tarrow o}$, and $\mname{is-expr}_{\epsilon
  \tarrow o}^{\alpha}$ are used to assert that a construction
represents the syntax tree of a variable, constant, and expression of
type $\alpha$, respectively.

The five kinds of eval-free expressions and the syntactic values that
represent their syntax trees are given in
Table~\ref{tab:eval-free-exprs}.  We will see in the next section that
$\mname{is-var}_{\epsilon \tarrow o} \, \textbf{A}_\epsilon$,
$\mname{is-con}_{\epsilon \tarrow o} \, \textbf{A}_\epsilon$, and
$\mname{is-expr}_{\epsilon \tarrow o}^{\alpha} \, \textbf{A}_\epsilon$
assert that $\textbf{A}_\epsilon$ represents the syntax tree of a
variable, constant, and expression of type $\alpha$, respectively.

\fi

\begin{table}[b]
\bc
\begin{tabular}{|lll|}
\hline

\textbf{Kind}
& \textbf{Syntax}
& \textbf{Syntactic Values}\\

Variable \hspace*{15ex}
& $\textbf{x}_\alpha$  \hspace*{9ex}
& $\synbrack{\textbf{x}_\alpha}$\\

Constant
& $\textbf{c}_\alpha$
& $\synbrack{\textbf{c}_\alpha}$\\

Function application
& $\textbf{F}_{\alpha \tarrow \beta} \, \textbf{A}_\alpha$
& $\mname{app}_{\epsilon \tarrow \epsilon \tarrow \epsilon} \,
  \sE(\textbf{F}_{\alpha \tarrow \beta}) \, \sE(\textbf{A}_\alpha)$\\

Function abstraction
& $\LambdaApp \textbf{x}_\alpha \mdot \textbf{B}_\beta$
& $\mname{abs}_{\epsilon \tarrow \epsilon \tarrow \epsilon} \,
  \sE(\textbf{x}_\alpha) \, \sE(\textbf{B}_\beta)$\\

Quotation
& $\synbrack{\textbf{A}_\alpha}$
& $\mname{quo}_{\epsilon \tarrow \epsilon} \, \sE(\textbf{A}_\alpha)$\\

\hline
\end{tabular}
\ec
\caption{Five Kinds of Eval-Free Expressions}\label{tab:eval-free-exprs}
\end{table}

\subsection{Definitions and Abbreviations} \label{subsec:definitions}

As Andrews does in~\cite[p.~212]{Andrews02}, we introduce in
Table~\ref{tab:defs} several defined logical constants and
abbreviations.  The former includes constants for true and false and
the propositional connectives.  The latter includes notation for
equality, the propositional connectives, universal and existential
quantification, and a simplified notation for evaluations.

\begin{table}
\bc
\begin{tabular}{|lll|}
\hline

$(\textbf{A}_\alpha = \textbf{B}_\alpha)$ \hspace{1ex}
& stands for  \hspace{1ex}
& $=_{\alpha \tarrow \alpha \tarrow o} \, \textbf{A}_\alpha \, \textbf{B}_\alpha$.\\

%% $(\textbf{A}_o \Iff \textbf{B}_o)$ 
%% & stands for 
%% & $=_{o \tarrow o \tarrow o} \, \textbf{A}_o \, \textbf{B}_o$.\\

$T_o$ 
& stands for
& $=_{o \tarrow o \tarrow o} \; = \; =_{o \tarrow o \tarrow o}$.\\

$F_o$ 
& stands for
& $(\LambdaApp x_o \mdot T_o) = (\LambdaApp x_o \mdot x_o).$\\

\iffalse
& $(\LambdaApp \textbf{x}_o \mdot T_o) = (\LambdaApp \textbf{x}_o \mdot \textbf{x}_o)$\\
&
&\mbox{for some chosen $\textbf{x}_o \in \sV$}.\\
\fi

$(\ForallApp \textbf{x}_\alpha \mdot \textbf{A}_o)$ 
& stands for
& $(\LambdaApp \textbf{x}_\alpha \mdot T_o) = (\LambdaApp \textbf{x}_\alpha \mdot \textbf{A}_o)$.\\

$\wedge_{o \tarrow o \tarrow o}$ 
& stands for
& $\LambdaApp x_o \mdot \LambdaApp y_o \mdot {}$\\
& 
& \hspace{2ex}$((\LambdaApp g_{o \tarrow o \tarrow o} \mdot 
g_{o \tarrow o \tarrow o} \, T_o \, T_o) = {}$\\
&
& \hspace{3ex}$(\LambdaApp g_{o \tarrow o \tarrow o} \mdot 
g_{o \tarrow o \tarrow o} \, x_o \, y_o)).$\\

\iffalse
& $\LambdaApp \textbf{x}_o \mdot \LambdaApp \textbf{y}_o \mdot {}$\\
& 
& \hspace{2ex}$((\LambdaApp \textbf{g}_{o \tarrow o \tarrow o} \mdot 
\textbf{g}_{o \tarrow o \tarrow o} \, T_o \, T_o) = {}$\\
&
& \hspace{3ex}$(\LambdaApp \textbf{g}_{o \tarrow o \tarrow o} \mdot 
\textbf{g}_{o \tarrow o \tarrow o} \, \textbf{x}_o \, \textbf{y}_o))$\\
&
&\mbox{for some chosen $\textbf{x}_o, \textbf{y}_o,  \textbf{g}_{o \tarrow o \tarrow o}\in \sV$}.\\
\fi

$(\textbf{A}_o \Andd \textbf{B}_o)$ 
& stands for
& $\wedge_{o \tarrow o \tarrow o} \, \textbf{A}_o \, \textbf{B}_o$.\\

$\Implies_{o \tarrow o \tarrow o}$ 
& stands for
& $\LambdaApp x_o \mdot \LambdaApp y_o \mdot (x_o = (x_o \Andd y_o)).$\\ 

\iffalse
& $\LambdaApp \textbf{x}_o \mdot \LambdaApp \textbf{y}_o \mdot (\textbf{x}_o = (\textbf{x}_o \Andd \textbf{y}_o))$\\ 
&
&\mbox{for some chosen $\textbf{x}_o, \textbf{y}_o \in \sV$}.\\
\fi

$(\textbf{A}_o \Implies \textbf{B}_o)$ 
& stands for
& ${\Implies_{o \tarrow o \tarrow o}} \, \textbf{A}_o \,\textbf{B}_o$.\\

$\Neg_{o \tarrow o}$ 
& stands for
& ${=_{o \tarrow o \tarrow o}} \, F_o$.\\

$(\Neg\textbf{A}_o)$ 
& stands for
& $\Neg_{o \tarrow o} \, \textbf{A}_o$.\\

$\vee_{o \tarrow o \tarrow o}$ 
& stands for
& $\LambdaApp x_o \mdot \LambdaApp y_o \mdot \Neg (\Neg x_o \Andd \Neg y_o).$\\

\iffalse
& $\LambdaApp \textbf{x}_o \mdot \LambdaApp \textbf{y}_o \mdot \Neg (\Neg\textbf{x}_o \Andd \Neg \textbf{y}_o)$\\
&
&\mbox{for some chosen $\textbf{x}_o, \textbf{y}_o \in \sV$}.\\
\fi

$(\textbf{A}_o \Or \textbf{B}_o)$ 
& stands for
& ${\vee_{o \tarrow o \tarrow o}} \, \textbf{A}_o \, \textbf{B}_o$.\\

$(\ForsomeApp \textbf{x}_\alpha \mdot \textbf{A}_o)$ 
& stands for
& $\Neg(\ForallApp \textbf{x}_\alpha \mdot \Neg\textbf{A}_o)$.\\

%% $(\textbf{A}_\alpha \not= \textbf{B}_\alpha)$ 
%% & stands for 
%% & $\Neg(\textbf{A}_\alpha = \textbf{B}_\alpha)$.\\

$\sembrack{\textbf{A}_\epsilon}_\beta$ 
& stands for
& $\sembrack{\textbf{A}_\epsilon}_{{\bf B}_\beta}$.\\

\hline
\end{tabular}
\ec
\caption{Definitions and Abbreviations}\label{tab:defs}
\end{table}

\section{Semantics}

The semantics of {\churchqe} extends the semantics of
{\qzero}~\cite[pp.~238--239]{Andrews02} by defining the domain of the
type $\epsilon$ and what quotations and evaluations mean.

\subsection{Frames}

A \emph{frame} of {\churchqe} is a collection $\set{D_\alpha \;|\;
  \alpha \in \sT}$ of domains such that:

\be

  \item $D_\iota$ is a nonempty set of values (called \emph{individuals}).

  \item $D_o = \set{\TRUE,\FALSE}$, the set of standard \emph{truth
    values}.

  \item $D_\epsilon$ is the set of \emph{constructions} of {\churchqe}.

  \item For $\alpha, \beta \in \sT$, $D_{\alpha \rightarrow \beta}$ is the set
    of \emph{total functions} from $D_\alpha$ to $D_\beta$.

\ee
%
\iffalse
$\sD_\iota$ is the \emph{domain of individuals}, $\sD_o$ is the
\emph{domain of truth values}, $\sD_\epsilon$ is the \emph{domain of
  constructions}, and, for $\alpha, \beta \in \sT$,
$\sD_{(\alpha\beta)}$ is a \emph{function domain}.
\fi

\subsection{Interpretations}

An \emph{interpretation} of {\churchqe} is a pair $(\set{D_\alpha
  \;|\; \alpha \in \sT},I)$ consisting of a frame and an
interpretation function $I$ that maps each constant in $\sC$ of type
$\alpha$ to an element of $D_\alpha$ such that:

\be

  \item For all $\alpha \in \sT$, $I(=_{\alpha \rightarrow \alpha
    \rightarrow o})$ is the function $f \in D_{\alpha \rightarrow
    \alpha \rightarrow o}$ such that, for all $d_1,d_2 \in D_\alpha$,
    $f(d_1)(d_2) = \TRUE$ iff $d_1 = d_2$.  That is, $I(=_{\alpha
    \rightarrow \alpha \rightarrow o})$ is the identity relation on
    $D_\alpha$.

  \item $I(\mname{is-var}_{\epsilon \tarrow o})$ is the function $f
    \in D_{\epsilon \rightarrow o}$ such that, for all
    $\textbf{A}_\epsilon \in D_\epsilon$, $f(\textbf{A}_\epsilon) =
    \TRUE$ iff $\textbf{A}_\epsilon = \synbrack{\textbf{x}_\alpha}$
    for some variable $\textbf{x}_\alpha \in \sV$.

  \item $I(\mname{is-con}_{\epsilon \tarrow o})$ is the function $f
    \in D_{\epsilon \rightarrow o}$ such that, for all
    $\textbf{A}_\epsilon \in D_\epsilon$, $f(\textbf{A}_\epsilon) =
    \TRUE$ iff $\textbf{A}_\epsilon = \synbrack{\textbf{c}_\alpha}$
    for some constant $\textbf{c}_\alpha \in \sC$.

  \item \bsp $I(\mname{app}_{\epsilon \tarrow \epsilon \tarrow
    \epsilon})$ is the function $f \in D_{\epsilon \tarrow \epsilon
    \tarrow \epsilon}$ such that, for all $\textbf{A}_\epsilon,
    \textbf{B}_\epsilon \in D_\epsilon$,
    $f(\textbf{A}_\epsilon)(\textbf{B}_\epsilon)$ is the construction
    $\mname{app}_{\epsilon \tarrow \epsilon \tarrow \epsilon} \,
    \textbf{A}_\epsilon \, \textbf{B}_\epsilon$. \esp

  \item $I(\mname{abs}_{\epsilon \tarrow \epsilon \tarrow \epsilon})$
    is the function $f \in D_{\epsilon \tarrow \epsilon \tarrow
      \epsilon}$ such that, for all $\textbf{A}_\epsilon,
    \textbf{B}_\epsilon \in D_\epsilon$,
    $f(\textbf{A}_\epsilon)(\textbf{B}_\epsilon)$ is the construction
    $\mname{abs}_{\epsilon \tarrow \epsilon \tarrow \epsilon} \,
    \textbf{A}_\epsilon \, \textbf{B}_\epsilon$.

  \item $I(\mname{quo}_{\epsilon \tarrow \epsilon})$ is the function
    $f \in D_{\epsilon \tarrow \epsilon}$ such that, for all
    $\textbf{A}_\epsilon \in D_\epsilon$, $f(\textbf{A}_\epsilon)$ is
    the construction $\mname{quo}_{\epsilon \tarrow \epsilon} \,
    \textbf{A}_\epsilon$.

  \item For all $\alpha \in \sT$, $I(\mname{is-expr}_{\epsilon \tarrow
    o}^{\alpha})$ is the function $f \in D_{\epsilon \rightarrow o}$
    such that, for all $\textbf{A}_\epsilon \in D_\epsilon$,
    $f(\textbf{A}_\epsilon) = \TRUE$ iff $\textbf{A}_\epsilon =
    \sE(\textbf{B}_\alpha)$ for some (eval-free) expression
    $\textbf{B}_\alpha$.

\ee

\begin{rem}[Domain of Constructions]\em
We would prefer to define $D_\epsilon$ to be the set of proper
constructions because we need only proper constructions to represent
the syntax trees of eval-free expressions.  However, then the natural
interpretations of the three syntax constructors ---
$\mname{app}_{\epsilon \tarrow \epsilon \tarrow \epsilon}$,
$\mname{abs}_{\epsilon \tarrow \epsilon \tarrow \epsilon}$, and
$\mname{quo}_{\epsilon \tarrow \epsilon}$ --- would be partial
functions.  Since {\churchqe} admits only total functions, it is more
convenient to allow $D_\epsilon$ to include improper constructions
than to interpret the syntax constructors as total functions that
represent partial functions.
\end{rem}

An \emph{assignment} into a frame $\set{D_\alpha \;|\; \alpha \in
  \sT}$ is a function $\phi$ whose domain is $\sV$ such that, for each
variable $\textbf{x}_\alpha$, $\phi(\textbf{x}_\alpha) \in D_\alpha$.
Given an assignment $\phi$, a variable $\textbf{x}_\alpha$, and $d \in
D_\alpha$, let $\phi[\textbf{x}_\alpha \mapsto d]$ be the assignment
$\psi$ such that $\psi(\textbf{x}_\alpha) = d$ and
$\psi(\textbf{y}_\beta) = \phi(\textbf{y}_\beta)$ for all variables
$\textbf{y}_\beta \not= \textbf{x}_\alpha$.  Given an interpretation
$\sM = (\set{D_\alpha \;|\; \alpha \in \sT}, I)$,
$\mname{assign}(\sM)$ is the set of assignments into the frame of
$\sM$.

\subsection{Models} \label{subsec:models}

An interpretation $\sM = (\set{D_\alpha \;|\; \alpha \in \sT), I}$ is
a \emph{model} for {\churchqe} if there is a binary valuation
  function $V^{\cal M}$ such that, for all assignments $\phi \in
  \mname{assign}(\sM)$ and expressions $\textbf{C}_\gamma$, $V^{\cal
    M}_{\phi}(\textbf{C}_\gamma) \in D_\gamma$ and each of the
  following conditions is satisfied:

\be

  \item If $\textbf{C}_\gamma \in \sV$, then $V^{\cal
    M}_{\phi}(\textbf{C}_\gamma) = \phi(\textbf{C}_\gamma)$.

  \item If $\textbf{C}_\gamma \in \sC$, then $V^{\cal
    M}_{\phi}(\textbf{C}_\gamma) = I(\textbf{C}_\gamma)$.

  \item If $\textbf{C}_\gamma$ is $\textbf{F}_{\alpha \tarrow \beta} \,
    \textbf{A}_\alpha$, then $V^{\cal M}_{\phi}(\textbf{C}_\gamma) =
    V^{\cal M}_{\phi}(\textbf{F}_{\alpha \tarrow \beta})(V^{\cal
      M}_{\phi}(\textbf{A}_\alpha))$.

  \item If $\textbf{C}_\gamma$ is $\LambdaApp \textbf{x}_\alpha \mdot
    \textbf{B}_\beta$, then $V^{\cal M}_{\phi}(\textbf{C}_\gamma)$ is
    the function $f \in D_{\alpha \tarrow \beta}$ such that, for each
    $d \in D_\alpha$, $f(d) = V^{\cal M}_{\phi[{\bf x}_\alpha \mapsto
      d]}(\textbf{B}_\beta)$.

  \item If $\textbf{C}_\gamma$ is $\synbrack{\textbf{A}_\alpha}$, then
    $V^{\cal M}_{\phi}(\textbf{C}_\gamma) = \sE(\textbf{A}_\alpha)$.

  \item If $\textbf{C}_\gamma$ is
    $\sembrack{\textbf{A}_\epsilon}_\beta$ and $V^{\cal
    M}_{\phi}(\mname{is-expr}_{\epsilon \tarrow o}^{\beta} \,
    \textbf{A}_\epsilon) = \TRUE$, then \[V^{\cal
      M}_{\phi}(\textbf{C}_\gamma) = V^{\cal
      M}_{\phi}(\sE^{-1}(V^{\cal M}_{\phi}(\textbf{A}_\epsilon))).\]

\ee 

\begin{prop}
Models for {\churchqe} exist.
\end{prop}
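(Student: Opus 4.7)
The plan is to build a single concrete interpretation $\sM$ and then exhibit a valuation $V^{\sM}$ fulfilling conditions 1--6 of Section~\ref{subsec:models}. For the frame I take $D_\iota = \set{0}$, let $D_o$ and $D_\epsilon$ be as required, and set $D_{\alpha \tarrow \beta}$ to be the full set of total functions from $D_\alpha$ to $D_\beta$. A routine induction on the type structure shows every $D_\alpha$ is nonempty, because $D_\iota$, $D_o$, and $D_\epsilon$ are nonempty and a function space with nonempty codomain is nonempty. The interpretation function $I$ is completely forced on the logical constants by clauses 1--7 of the definition of an interpretation, and each of those clauses uniquely specifies a total function, so $I$ is well-defined there. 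On any non-logical constants $\textbf{c}_\alpha$, I pick any element of $D_\alpha$ using nonemptiness.

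The real content is defining $V^{\sM}_\phi$ on all expressions. I would use well-founded recursion on the lexicographic measure $(n(\textbf{C}_\gamma), s(\textbf{C}_\gamma))$, where $n(\textbf{C}_\gamma)$ counts the occurrences of the evaluation operator in $\textbf{C}_\gamma$ and $s(\textbf{C}_\gamma)$ is the structural size. Cases 1--5 follow directly from the corresponding clauses, with the immediate subexpressions having strictly smaller measure: the evaluation count stays fixed or decreases, and the size strictly decreases. For case 6, given $\sembrack{\textbf{A}_\epsilon}_\beta$, I first compute $c = V^{\sM}_\phi(\textbf{A}_\epsilon)$ by the inductive hypothesis, which is available since $n(\textbf{A}_\epsilon) < n(\sembrack{\textbf{A}_\epsilon}_\beta)$. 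If $c = \sE(\textbf{B}_\beta)$ for some eval-free $\textbf{B}_\beta$---equivalently, by clause 7 of the interpretation definition, if $V^{\sM}_\phi(\mname{is-expr}_{\epsilon \tarrow o}^{\beta} \, \textbf{A}_\epsilon) = \TRUE$, with $\textbf{B}_\beta$ unique by injectivity of $\sE$---I set $V^{\sM}_\phi(\sembrack{\textbf{A}_\epsilon}_\beta) = V^{\sM}_\phi(\textbf{B}_\beta)$. This recursive call is legitimate because $\textbf{B}_\beta$ is eval-free, so $n(\textbf{B}_\beta) = 0 < n(\sembrack{\textbf{A}_\epsilon}_\beta)$. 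Otherwise I assign an arbitrary element of $D_\beta$; condition 6 imposes no constraint in that case.

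The main obstacle is precisely the termination of this recursion. Naively one might try structural recursion on expression size, but it fails: the retrieved $\textbf{B}_\beta$ under an evaluation can be arbitrarily larger than the enclosing $\sembrack{\textbf{A}_\epsilon}_\beta$ when $\textbf{A}_\epsilon$ is a quotation of a large expression. The lexicographic measure circumvents this, because once an evaluation is unfolded, the decoded expression is eval-free and thus has evaluation count $0$, strictly less than that of the expression being defined. The remaining verification---that $V^{\sM}_\phi(\textbf{C}_\gamma) \in D_\gamma$ and that all six conditions hold---is a parallel induction that is immediate from the construction; condition 7 of the interpretation, together with the injectivity of $\sE$, ensures that $\sE^{-1}(V^{\sM}_\phi(\textbf{A}_\epsilon))$ has exactly the right type $\beta$ whenever the guarding hypothesis in condition 6 is met.
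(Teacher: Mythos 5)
Your proof is correct, and it supplies substantially more than the paper does: the paper's entire argument is the assertion that such an interpretation is ``easy to construct,'' plus the remark (which you also make) that $V^{\cal M}_{\phi}(\sembrack{\textbf{A}_\epsilon}_\beta)$ may be assigned an arbitrary element of $D_\beta$ when the guard in condition 6 fails. The one genuinely nontrivial point in the proposition is exactly the one you isolate: condition 6 forces a double valuation, and the decoded expression $\sE^{-1}(V^{\cal M}_{\phi}(\textbf{A}_\epsilon))$ can be far larger than $\sembrack{\textbf{A}_\epsilon}_\beta$ itself (e.g.\ when $\textbf{A}_\epsilon$ is a variable assigned a huge construction), so naive structural recursion does not establish that a valuation satisfying conditions 1--6 exists. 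Your lexicographic measure (evaluation count, then size) works precisely because quotation is restricted to eval-free expressions, so every construction in $D_\epsilon$ decodes to an expression with evaluation count $0$; this is the same design decision the paper credits elsewhere with defusing the Evaluation Problem. Two small points to tidy, neither affecting correctness: the recursion is best set up as defining, for each expression $\textbf{C}_\gamma$ in order of the measure, the whole function $\phi \mapsto V^{\cal M}_{\phi}(\textbf{C}_\gamma)$ on all assignments at once, since the abstraction case varies $\phi$ and in case 6 the decoded expression itself depends on $\phi$; and the equivalence of your semantic guard with $V^{\cal M}_{\phi}(\mname{is-expr}_{\epsilon \tarrow o}^{\beta} \, \textbf{A}_\epsilon) = \TRUE$ relies on conditions 2 and 3 of the model already holding at strictly smaller measure, which your ordering does license.
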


\begin{proof} \bsp
It is easy to construct an interpretation $\sM = (\set{\sD_\alpha
  \;|\; \alpha \in \sT}, I)$ that is a model for {\churchqe}.  Note
that, if $V^{\cal M}_{\phi}(\mname{is-expr}_{\epsilon \tarrow
  o}^{\beta} \, \textbf{A}_\epsilon) = \FALSE$, then $V^{\cal
  M}_{\phi}(\sembrack{\textbf{A}_\epsilon}_\beta)$ can be any value in
$D_\beta$. \hfill $\Box$\esp
\end{proof}

\begin{rem}[Standard vs.~General Models]\em
The notion of a model defined here is a \emph{standard model} in which
each function domain $D_{\alpha \rightarrow \beta}$ is the set of
\emph{all} total functions from $D_\alpha$ to $D_\beta$.  Andrews'
semantics for {\qzero} is based on the notion of a \emph{general
  model}, introduced by Leon Henkin~\cite{Henkin50}, in which each
function domain $D_{\alpha \rightarrow \beta}$ is a set of \emph{some}
total functions from $D_\alpha$ to $D_\beta$.  General models can be
easily defined for {\churchqe}.  The definition of a frame, however,
has to be changed so that the domain $D_\epsilon$ may include
``nonstandard constructions''.
\end{rem}

\begin{rem}[Semantics of Evaluations]\em\bsp
When $V^{\cal M}_{\phi}(\mname{is-expr}_{\epsilon \tarrow o}^{\beta}
\, \textbf{A}_\epsilon) = \TRUE$, the semantics of $V^{\cal
  M}_{\phi}(\sembrack{\textbf{A}_\epsilon}_\beta)$ involves a double
valuation as mentioned in the Double Substitution Problem described in
the Introduction.\esp
\end{rem}

\begin{rem}[Undefined Evaluations]\em
Suppose $V^{\cal M}_{\phi}(\textbf{A}_\epsilon)$ is an improper
construction.  Then $V^{\cal M}_{\phi}(\sE^{-1}(V^{\cal
  M}_{\phi}(\textbf{A}_\epsilon)))$ is undefined and $V^{\cal
  M}_{\phi}(\sembrack{\textbf{A}_\epsilon}_\beta)$ has no natural
value.  Since {\churchqe} does not admit undefined expressions,
$V^{\cal M}_{\phi}(\sembrack{\textbf{A}_\epsilon}_\beta)$ is defined
but its value is unspecified. Similarly, if $V^{\cal
  M}_{\phi}(\textbf{A}_\epsilon)$ is a proper construction of the form
$\sE(\textbf{B}_\gamma)$ with $\gamma \not= \beta$, $V^{\cal
  M}_{\phi}(\sembrack{\textbf{A}_\epsilon}_\beta)$ is unspecified.
\end{rem}

Let $\sM$ be a model for {\churchqe}. $\textbf{A}_o$ is \emph{valid}
in $\sM$, written $\sM \models \textbf{A}_o$, if $\sV^{\cal
  M}_{\phi}(\textbf{A}_o) = \mname{T}$ for all assignments $\phi \in
\mname{assign}(\sM)$.

\begin{prop} \label{prop:val-const}
Let $\sM$ be a model for {\churchqe}, $\textbf{A}_\epsilon$ be a
construction, and $\phi \in \mname{assign}(\sM)$.  Then $\sV^{\cal
  M}_{\phi}(\textbf{A}_\epsilon) = \textbf{A}_\epsilon$.
\end{prop}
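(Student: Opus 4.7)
The plan is to prove the claim by structural induction on the construction $\textbf{A}_\epsilon$, following the inductive definition of constructions given earlier in the paper (quotations of variables or constants as base cases, and applications of the three syntax constructors $\mname{app}$, $\mname{abs}$, $\mname{quo}$ as inductive cases). The key idea is that clauses 4--6 in the definition of an interpretation have been designed precisely so that each syntax constructor is interpreted as the ``free'' operation that literally builds the construction it represents; combined with clause 5 of the definition of a model (which handles quotations via $\sE$), this forces $V^{\cal M}_\phi$ to act as the identity on $D_\epsilon$ restricted to constructions.

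For the base cases, I would use clause 5 of the model definition: $V^{\cal M}_\phi(\synbrack{\textbf{x}_\alpha}) = \sE(\textbf{x}_\alpha) = \synbrack{\textbf{x}_\alpha}$, and similarly $V^{\cal M}_\phi(\synbrack{\textbf{c}_\alpha}) = \synbrack{\textbf{c}_\alpha}$, using the defining equations of $\sE$ on variables and constants. For the inductive step, consider a construction of the form $\mname{app}_{\epsilon\tarrow\epsilon\tarrow\epsilon}\,\textbf{B}_\epsilon\,\textbf{C}_\epsilon$ where $\textbf{B}_\epsilon$ and $\textbf{C}_\epsilon$ are constructions. Using clause 3 (function application) of the model definition twice, clause 2 (constant) to replace $V^{\cal M}_\phi(\mname{app}_{\epsilon\tarrow\epsilon\tarrow\epsilon})$ by $I(\mname{app}_{\epsilon\tarrow\epsilon\tarrow\epsilon})$, clause 4 of the interpretation definition to evaluate that constant's interpretation on its arguments, and finally the inductive hypothesis on $\textbf{B}_\epsilon$ and $\textbf{C}_\epsilon$, I obtain $V^{\cal M}_\phi(\mname{app}_{\epsilon\tarrow\epsilon\tarrow\epsilon}\,\textbf{B}_\epsilon\,\textbf{C}_\epsilon) = \mname{app}_{\epsilon\tarrow\epsilon\tarrow\epsilon}\,\textbf{B}_\epsilon\,\textbf{C}_\epsilon$. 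The cases for $\mname{abs}_{\epsilon\tarrow\epsilon\tarrow\epsilon}$ and $\mname{quo}_{\epsilon\tarrow\epsilon}$ are completely analogous, using clauses 5 and 6 of the interpretation definition respectively.

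There is really no hard step here; the result is essentially a bookkeeping exercise showing that the semantic clauses for the syntax constructors were chosen to make them ``initial'' in the expected sense. The only minor subtlety is being careful that the induction is on constructions (as defined in Subsection~2.3), not on arbitrary expressions of type $\epsilon$, since the claim would fail for evaluations $\sembrack{\textbf{A}_\epsilon}_\epsilon$ and for expressions involving variables of type $\epsilon$. The proof also does not require that the constructions involved be proper, because the interpretations of $\mname{app}$, $\mname{abs}$, and $\mname{quo}$ are total and simply build the corresponding syntactic string regardless of whether the result represents the syntax tree of a well-formed eval-free expression.
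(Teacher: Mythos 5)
Your proof is correct and follows essentially the same route as the paper, which simply states that the result ``follows immediately from conditions 4--6 of the definition of an interpretation and condition 5 of the definition of a model''; you have merely made explicit the structural induction on constructions that the paper leaves implicit. Your added remarks --- that the induction is over constructions rather than arbitrary expressions of type $\epsilon$, and that properness is not needed --- are accurate clarifications rather than deviations.
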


\begin{proof}
Follows immediately from conditions 4--6 of the definition of an
interpretation and condition 5 of the definition of a model.\hfill $\Box$
\end{proof}

\begin{thm}[Law of Quotation] \label{thm:quotation}
$\synbrack{\textbf{A}_\alpha} = \sE(\textbf{A}_\alpha)$ is valid in
  every model of {\churchqe}.
\end{thm}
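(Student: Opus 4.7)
The plan is to unfold the definition of validity and then reduce both sides of the equation to the same syntactic object, namely the construction $\sE(\textbf{A}_\alpha)$. Fix a model $\sM$ and an assignment $\phi \in \mname{assign}(\sM)$. Since $\synbrack{\textbf{A}_\alpha}$ is only a legal expression when $\textbf{A}_\alpha$ is eval-free, we may assume this throughout, so $\sE(\textbf{A}_\alpha)$ is a well-defined construction in $D_\epsilon$.

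First I would observe that, by conditions 2 and 3 of the definition of a model together with condition 1 of the definition of an interpretation, the formula $\synbrack{\textbf{A}_\alpha} = \sE(\textbf{A}_\alpha)$ has value $\TRUE$ under $\phi$ if and only if
\[
V^{\sM}_{\phi}(\synbrack{\textbf{A}_\alpha}) \;=\; V^{\sM}_{\phi}(\sE(\textbf{A}_\alpha)).
\]
So it suffices to show that both valuations yield the same element of $D_\epsilon$.

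Next I would compute each side independently. For the left-hand side, condition 5 of the definition of a model gives directly
$V^{\sM}_{\phi}(\synbrack{\textbf{A}_\alpha}) = \sE(\textbf{A}_\alpha)$. For the right-hand side, I would invoke Proposition~\ref{prop:val-const}: since $\sE(\textbf{A}_\alpha)$ is a construction, that proposition tells us $V^{\sM}_{\phi}(\sE(\textbf{A}_\alpha)) = \sE(\textbf{A}_\alpha)$. Hence both sides equal the same construction, which completes the argument, and since $\sM$ and $\phi$ were arbitrary, the formula is valid in every model.

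There is no real obstacle here; the only subtlety worth flagging is the implicit well-formedness check that $\textbf{A}_\alpha$ be eval-free (so that both $\synbrack{\textbf{A}_\alpha}$ and $\sE(\textbf{A}_\alpha)$ make sense) and that one invokes Proposition~\ref{prop:val-const} in its full strength: the construction $\sE(\textbf{A}_\alpha)$ is built up using the syntax constructors $\mname{app}$, $\mname{abs}$, $\mname{quo}$ applied to quotations of variables and constants, and it is precisely conditions 4--6 of the interpretation together with condition 5 of the model that make such a construction evaluate to itself.
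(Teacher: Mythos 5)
Your proposal is correct and follows essentially the same route as the paper's own proof: evaluate the left-hand side via condition 5 of the definition of a model and the right-hand side via Proposition~\ref{prop:val-const}, then conclude validity since the assignment was arbitrary. The additional remarks on eval-freeness and on why Proposition~\ref{prop:val-const} holds are accurate but not needed beyond what the paper already records.
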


\begin{proof}
Let $\sM$ be a model of {\churchqe} and $\phi \in
\mname{assign}(\sM)$.  Then 
\begin{align} \setcounter{equation}{0}
&
\sV^{\cal M}_{\phi}(\synbrack{\textbf{A}_\alpha}) \\
&=
\sE(\textbf{A}_\alpha) \\
&=
\sV^{\cal M}_{\phi}(\sE(\textbf{A}_\alpha))
\end{align}
(2) follows from condition 5 of the definition of a model, and (3)
follows from Proposition~\ref{prop:val-const}. Hence $\sV^{\cal
  M}_{\phi}(\synbrack{\textbf{A}_\alpha}) = V^{\cal
  M}_{\phi}(\sE(\textbf{A}_\alpha))$ for all $\phi \in
\mname{assign}(\sM)$ which implies $\synbrack{\textbf{A}_\alpha} =
\sE(\textbf{A}_\alpha)$ is valid in $\sM$.\hfill $\Box$
\end{proof}

\begin{thm}[Law of Disquotation] \label{thm:disquotation}\bsp
$\sembrack{\synbrack{\textbf{A}_\alpha}}_\alpha = \textbf{A}_\alpha$
  is valid in every model of {\churchqe}.\esp
\end{thm}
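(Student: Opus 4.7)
The plan is to fix an arbitrary model $\sM$ and assignment $\phi \in \mname{assign}(\sM)$ and show that $V^{\cal M}_{\phi}(\sembrack{\synbrack{\textbf{A}_\alpha}}_\alpha) = V^{\cal M}_{\phi}(\textbf{A}_\alpha)$; since $\phi$ is arbitrary, validity in $\sM$ follows, and since $\sM$ is arbitrary, the theorem follows. The whole argument will be an unpacking of conditions 5 and 6 of the definition of a model together with condition 7 of the definition of an interpretation.

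First I would note that the quotation $\synbrack{\textbf{A}_\alpha}$ is only well-formed because $\textbf{A}_\alpha$ is eval-free (by the quotation formation rule), so $\sE(\textbf{A}_\alpha)$ is defined and is a proper construction of type $\epsilon$. By condition 5 of the definition of a model, $V^{\cal M}_{\phi}(\synbrack{\textbf{A}_\alpha}) = \sE(\textbf{A}_\alpha)$. This is the key fact that allows us to apply the evaluation clause.

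Next I would verify the side condition needed to invoke condition 6 of the definition of a model, namely that $V^{\cal M}_{\phi}(\mname{is-expr}_{\epsilon \tarrow o}^{\alpha} \, \synbrack{\textbf{A}_\alpha}) = \TRUE$. By condition 3 of the definition of a model (function application) together with condition 7 of the definition of an interpretation, this reduces to checking that $V^{\cal M}_{\phi}(\synbrack{\textbf{A}_\alpha}) = \sE(\textbf{B}_\alpha)$ for some eval-free expression $\textbf{B}_\alpha$, which holds by the previous step with $\textbf{B}_\alpha = \textbf{A}_\alpha$.

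With the side condition in hand, condition 6 of the definition of a model gives $V^{\cal M}_{\phi}(\sembrack{\synbrack{\textbf{A}_\alpha}}_\alpha) = V^{\cal M}_{\phi}(\sE^{-1}(V^{\cal M}_{\phi}(\synbrack{\textbf{A}_\alpha})))$. Since $V^{\cal M}_{\phi}(\synbrack{\textbf{A}_\alpha}) = \sE(\textbf{A}_\alpha)$ and $\sE$ is injective, the inner term simplifies to $\textbf{A}_\alpha$, so the whole right-hand side equals $V^{\cal M}_{\phi}(\textbf{A}_\alpha)$, as required. I do not anticipate any real obstacle here; the only point worth being careful about is ensuring that the \emph{is-expr} side condition really is satisfied before invoking condition 6, since otherwise the value of the evaluation would be unspecified (as noted in the Remark on Undefined Evaluations).
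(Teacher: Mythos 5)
Your proposal is correct and takes essentially the same route as the paper's own proof: both apply condition 5 of the definition of a model to get $V^{\cal M}_{\phi}(\synbrack{\textbf{A}_\alpha}) = \sE(\textbf{A}_\alpha)$, check the $\mname{is-expr}_{\epsilon \tarrow o}^{\alpha}$ side condition, and then invoke condition 6 to collapse $\sE^{-1}(\sE(\textbf{A}_\alpha))$ back to $\textbf{A}_\alpha$. The only difference is that you spell out the verification of the side condition (via condition 7 of the definition of an interpretation), which the paper simply asserts.
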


\begin{proof}\bsp
Let $\sM$ be a model of {\churchqe} and $\phi \in
\mname{assign}(\sM)$.  Then
\begin{align} \setcounter{equation}{0}
&
\sV^{\cal M}_{\phi}(\sembrack{\synbrack{\textbf{A}_\alpha}}_\alpha) \\
&=
\sV^{\cal M}_{\phi}(\sE^{-1}(\sV^{\cal M}_{\phi}(\synbrack{\textbf{A}_\alpha}))) \\
&=
\sV^{\cal M}_{\phi}(\sE^{-1}(\sE(\textbf{A}_\alpha))) \\
&=
\sV^{\cal M}_{\phi}(\textbf{A}_\alpha)
\end{align}
Since $V^{\cal M}_{\phi}(\mname{is-expr}_{\epsilon \tarrow o}^{\alpha}
\, \synbrack{\textbf{A}_\alpha}) = \TRUE$, (2) follows from condition
6 of the definition of a model. $\sV^{\cal
  M}_{\phi}(\synbrack{\textbf{A}_\alpha}) = \sE(\textbf{A}_\alpha)$ by
condition 5 of the definition of a model.  (3) and (4) are then
immediate.  Hence $\sV^{\cal
  M}_{\phi}(\sembrack{\synbrack{\textbf{A}_\alpha}}_\alpha) =
\sV^{\cal M}_{\phi}(\textbf{A}_\alpha)$ for all $\phi \in
\mname{assign}(\sM)$ which implies
$\sembrack{\synbrack{\textbf{A}_\alpha}}_\alpha = \textbf{A}_\alpha$
is valid in $\sM$. \hfill $\Box$\esp
\end{proof}

\begin{rem}[Evaluation Problem]\em\bsp
Theorem~\ref{thm:disquotation} shows that disquotation holds
universally in {\churchqe} contrary to the Evaluation Problem
described in the Introduction.  We have avoided the Evaluation Problem
in {\churchqe} by admitting only quotations of eval-free expressions.
If quotations of non-eval-free expressions were allowed in
{\churchqe}, the logic would be significantly more expressive, but
also much more complicated, as seen in
{\qzerouqe}~\cite{FarmerArxiv14}.\esp
\end{rem}

\begin{rem}[Quotation restricted to Closed Expressions]\em \bsp
If quotation is restricted to closed eval-free expressions in
{\churchqe}, then the Variable Problem and Double Substitution Problem
disappear.  However, most of the usefulness of having quotation and
evaluation in {\churchqe} would also disappear --- which is
illustrated by the examples in the next section.\esp
\end{rem}

\section{Examples}

We will present in this section four examples that illustrate the
utility of the quotation and evaluation facility in {\churchqe}.

\subsection{Reasoning about Syntax}

Reasoning about the syntax of expressions is normally performed in the
metalogic, but in {\churchqe} reasoning about the syntax of eval-free
expressions can be performed in the logic itself.  This is done by
reasoning about constructions (which represent the syntax trees of
eval-free expressions) using quotation and the machinery of
constructions.  Algorithms that manipulate eval-free expressions can
be formalized as functions that manipulate constructions.  The
functions can be executed using beta-reduction, rewriting, and other
kinds of simplification.

As an example, consider the constant
$\mname{make-implication}_{\epsilon \tarrow \epsilon \tarrow
  \epsilon}$ defined as
\[\LambdaApp x_\epsilon \mdot \LambdaApp y_\epsilon \mdot
(\mname{app}_{\epsilon \tarrow \epsilon \tarrow \epsilon} \,
(\mname{app}_{\epsilon \tarrow \epsilon \tarrow \epsilon} \,
\synbrack{\Implies_{o \tarrow o \tarrow o}} \, x_\epsilon) \,
y_\epsilon).\] It can be used to build constructions that
represent implications.  As another example, consider the constant
$\mname{is-app}_{\epsilon \tarrow o}$ defined as
\[\LambdaApp x_\epsilon \mdot
\ForsomeApp y_\epsilon \mdot \ForsomeApp z_\epsilon
\mdot x_\epsilon = (\mname{app}_{\epsilon \tarrow \epsilon
  \tarrow \epsilon} \, y_\epsilon \, z_\epsilon).\]
It can be used to test whether a construction represents a function
application.

Reasoning about syntax is a two-step process: First, a construction is
built using quotation and the machinery of constructions, and second,
the construction is employed using evaluation.  Continuing the example
above, \[\mname{make-implication}_{\epsilon \tarrow \epsilon \tarrow
  \epsilon} \, \synbrack{\textbf{A}_o} \, \synbrack{\textbf{B}_o}\]
builds a construction equivalent to the quotation
$\synbrack{\textbf{A}_o \Implies \textbf{B}_o}$ and
\[\sembrack{\mname{make-implication}_{\epsilon \tarrow \epsilon \tarrow
  \epsilon} \, \synbrack{\textbf{A}_o} \, \synbrack{\textbf{B}_o}}_o\]
employs the construction as the implication $\textbf{A}_o \Implies
\textbf{B}_o$.  Using this mixture of quotation and evaluation, it is
possible to express the interplay of syntax and semantics that is
needed to formalize syntax-based algorithms that are commonly used in
mathematics~\cite{Farmer13}.  See
section~\ref{secsub:meaning-formulas} for an example.

\subsection{Quasiquotation}

Quasiquotation is a parameterized form of quotation in which the
parameters serve as holes in a quotation that are filled with
expressions that denote syntactic values.  It is a very powerful
syntactic device for specifying expressions and defining macros.
Quasiquotation was introduced by Willard Van Orman Quine in 1940 in
the first version of his book \emph{Mathematical
  Logic}~\cite{Quine03}.  It has been extensively employed in the Lisp
family of programming languages~\cite{Bawden99}.\footnote{In Lisp, the
  standard symbol for quasiquotation is the backquote (\texttt{`})
  symbol, and thus in Lisp, quasiquotation is usually called
  \emph{backquote}.}

In {\churchqe}, constructing a large quotation from smaller quotations
can be tedious because it requires many applications of syntax
constructors.  Quasiquotation provides a convenient way to construct
big quotations from little quotations.  It can be defined
straightforwardly in {\churchqe}.

A \emph{quasi-expression} of {\churchqe} is defined inductively as
follows:

\be

  \item $\commabrack{\textbf{A}_\epsilon}$ is a quasi-expression
    called an \emph{antiquotation}.

  \item $\textbf{x}_\alpha$ is a quasi-expression.

  \item $\textbf{c}_\alpha$ is a quasi-expression.

  \item If $M$ and $N$ are quasi-expressions, then $(M \, N)$,
    $(\LambdaApp \textbf{x}_\alpha \mdot N)$, $(\LambdaApp
    \commabrack{\textbf{A}_\epsilon} \mdot N)$, and $\synbrack{M}$ are
    quasi-expressions.

\ee

\noindent
A quasi-expression is thus an expression where one or more
subexpressions have been replaced by antiquotations.  For example,
$\Neg(\textbf{A}_o \Andd \commabrack{\textbf{B}_\epsilon})$ is a
quasi-expression.  Obviously, every expression is a quasi-expression.

Let $\sE'$ be the function mapping quasi-expressions to expressions
of type~$\epsilon$ that is defined inductively as follows:

\be

  \item $\sE'(\commabrack{\textbf{A}_\epsilon}) = \textbf{A}_\epsilon$.

  \item $\sE'(\textbf{x}_\alpha) = \synbrack{\textbf{x}_\alpha}$.

  \item $\sE'(\textbf{c}_\alpha) = \synbrack{\textbf{c}_\alpha}$.

  \item $\sE'(M \, N) = \mname{app}_{\epsilon \tarrow \epsilon \tarrow
    \epsilon} \, \sE'(M) \, \sE'(N)$.

  \item $\sE'(\LambdaApp M \mdot N) = \mname{abs}_{\epsilon \tarrow
    \epsilon \tarrow \epsilon} \, \sE'(M) \, \sE'(N)$.

  \item $\sE(\synbrack{M}) = \mname{quo}_{\epsilon \tarrow \epsilon} \,
    \sE'(M)$.

\ee

\noindent
Notice that $\sE'(M) = \sE(M)$ when $M$ is an expression.  Continuing
our example above, $\sE'(\Neg(\textbf{A}_o \Andd
\commabrack{\textbf{B}_\epsilon})) = {}$
\[\mname{app}_{\epsilon \tarrow
  \epsilon \tarrow \epsilon} \, \synbrack{\Neg_{o \tarrow o}} \,
(\mname{app}_{\epsilon \tarrow \epsilon \tarrow \epsilon} \,
(\mname{app}_{\epsilon \tarrow \epsilon \tarrow \epsilon}
\synbrack{\wedge_{o \tarrow o \tarrow o}} \, \sE'(\textbf{A}_o))
\, \textbf{B}_\epsilon).\]

A \emph{quasiquotation} is an expression of the form $\synbrack{M}$
where $M$ is a quasi-expression.  Thus every quotation is a
quasiquotation.  The quasiquotation $\synbrack{M}$ serves as an
alternate notation for the expression $\sE'(M)$.  So
$\synbrack{\Neg(\textbf{A}_o \Andd \commabrack{\textbf{B}_\epsilon})}$
stands for the significantly more verbose expression in the previous
paragraph.  It represents the syntax tree of a negated conjunction in
which the part of the tree corresponding to the second conjunct is
replaced by the syntax tree represented by $\textbf{B}_\epsilon$.  If
$\textbf{B}_\epsilon$ is a quotation $\synbrack{\textbf{C}_o}$, then
the quasiquotation $\synbrack{\Neg(\textbf{A}_o \Andd
  \commabrack{\synbrack{\textbf{C}_o}})}$ is equivalent to the
quotation $\synbrack{\Neg(\textbf{A}_o \Andd \textbf{C}_o)}$.

\subsection{Schemas}

A \emph{schema} is a metalogical expression containing syntactic
variables.  An instance of a schema is a logical expression obtained
by replacing the syntactic variables with appropriate logical
expressions.  In {\churchqe}, a schema can be formalized as a single
logical expression.

For example, consider the \emph{law of excluded middle (LEM)} that is
expressed as the formula schema $A \Or \Neg A$ where $A$ is a
syntactic variable ranging over all formulas.  LEM can be
formalized in {\churchqe} as the universal statement
\[\ForallApp x_\epsilon \mdot 
\mname{is-expr}_{\epsilon \tarrow o}^{o} \, x_\epsilon \Implies
\sembrack{x_\epsilon}_o \Or \Neg \sembrack{x_\epsilon}_o.\] An
instance of this formalization of LEM is any instance of the universal
statement.  Using quasiquotation, LEM could also be formalized in
{\churchqe} as
\[\ForallApp x_\epsilon \mdot 
\mname{is-expr}_{\epsilon \tarrow o}^{o} \, x_\epsilon \Implies
\sembrack{\synbrack{\commabrack{x_\epsilon} \Or \Neg
    \commabrack{x_\epsilon}}}_o.\]

If we assume that the domain of the type $\iota$ is the natural
numbers and $\sC$ includes the usual constants of natural number
arithmetic (including a constant $\mname{S}_{\iota \tarrow \iota}$
representing the successor function), then the (first-order)
\emph{induction schema for Peano arithmetic} can be formalized in
     {\churchqe} as
\begin{align*}
&
\ForallApp f_\epsilon \mdot 
(\mname{is-expr}_{\epsilon \tarrow o}^{\iota \tarrow o} \, f_\epsilon \Andd
\mname{is-peano}_{\epsilon \tarrow o} \, f_\epsilon) \Implies {} \\
&
\hspace{4ex}
((\sembrack{f_\epsilon}_{\iota \tarrow o} \, 0 \Andd
(\ForallApp x_\iota \mdot \sembrack{f_\epsilon}_{\iota \tarrow o} \, x_\iota \Implies
\sembrack{f_\epsilon}_{\iota \tarrow o} \, 
(\mname{S}_{\iota \tarrow \iota} \, x_\iota)))
\Implies 
\ForallApp x_\iota \mdot \sembrack{f_\epsilon}_{\iota \tarrow o} \, x_\iota)
\end{align*}
where $\mname{is-peano}_{\epsilon \tarrow o} \, f_\epsilon$ holds iff
$f_\epsilon$ represents the syntactic tree of a formula of first-order
Peano arithmetic.  Hence it is possible to directly formalize
first-order Peano arithmetic in {\churchqe}.  The \emph{induction
  schema for Presburger arithmetic} can be formalized similarly using
an appropriate predicate $\mname{is-presburger}_{\epsilon \tarrow o}$.

\subsection{Meaning Formulas} \label{secsub:meaning-formulas}

Many symbolic algorithms work by manipulating mathematical expressions
in a mathematically meaningful way.  A \emph{meaning formula} for such
an algorithm is a statement that captures the mathematical
relationship between the input and output expressions of the
algorithm.  For example, consider a symbolic differentiation algorithm
that takes as input an expression (say $x^2$), repeatedly applies
syntactic differentiation rules to the expression, and then returns as
output the final expression ($2x$) that is produced.  The intended
meaning formula of this algorithm states that the function
($\LambdaApp x : \mathbb{R} \mdot 2x$) represented by the output
expression is the derivative of the function ($\LambdaApp x :
\mathbb{R} \mdot x^2$) represented by the input expression.

Meaning formulas are difficult to express in a traditional logic like
first-order logic or simple type theory since there is no way to
directly refer to the syntactic structure of the expressions in the
logic~\cite{Farmer13}.  However, meaning formulas can be easily
expressed in {\churchqe}.

\bsp Consider the following example.  Assume that the domain of the
type~$\iota$ is the real numbers and $\sC$ includes the usual
constants of real number arithmetic plus (1)
$\mname{is-poly}_{\epsilon \tarrow o}$ such that
$\mname{is-poly}_{\epsilon \tarrow o} \, \textbf{A}_\epsilon = \TRUE$
iff $\textbf{A}_\epsilon$ represents a syntax tree of an expression of
type $\iota$ that is a polynomial, (2) $\mname{deriv}_{(\iota \tarrow
  \iota) \tarrow (\iota \tarrow \iota)}$ such that
$\mname{deriv}_{(\iota \tarrow \iota) \tarrow (\iota \tarrow \iota)}
\, \textbf{F}_{\iota \tarrow \iota}$ is the derivative of the function
$\textbf{F}_{\iota \tarrow \iota}$, and (3)
$\mname{poly-diff}_{\epsilon \tarrow \epsilon \tarrow \epsilon}$ such
that, if $\mname{is-poly}_{\epsilon \tarrow o} \, \textbf{A}_\epsilon$
holds, then $\mname{poly-diff}_{\epsilon \tarrow \epsilon \tarrow
  \epsilon} \, \textbf{A}_\epsilon \, \synbrack{\textbf{x}_\iota}$ is
the result of applying the usual differentiation rules for polynomials
to $\textbf{A}_\epsilon$ with respect to $\textbf{x}_\iota$.  Then the
meaning formula for $\mname{poly-diff}_{\epsilon \tarrow \epsilon
  \tarrow \epsilon}$ is
\begin{align*}
&
\ForallApp u_\epsilon \mdot \ForallApp v_\epsilon \mdot
(\mname{is-var}_{\epsilon \tarrow o} \, u_\epsilon
 \Andd
\mname{is-expr}_{\epsilon \tarrow o}^{\iota} \, u_\epsilon
 \Andd
\mname{is-poly}_{\epsilon \tarrow o} \, v_\epsilon) \Implies {} \\
&
\hspace{4ex}
\mname{deriv}_{(\iota \tarrow \iota) \tarrow (\iota \tarrow \iota)}
(\sembrack{\mname{abs}_{\epsilon \tarrow  \epsilon \tarrow \epsilon} \,
u_\epsilon \, v_\epsilon}_{\iota \tarrow \iota}) = {} \\
&
\hspace{4ex}
\sembrack{\mname{abs}_{\epsilon \tarrow  \epsilon \tarrow \epsilon} \,
u_\epsilon \, 
(\mname{poly-diff}_{\epsilon \tarrow \epsilon \tarrow \epsilon} \, 
v_\epsilon \, u_\epsilon)}_{\iota \tarrow \iota}.\footnotemark
\end{align*}

\footnotetext{We restrict this example to polynomials since polynomial
  functions and their derivatives are always total.  Thus issues of
  undefinedness do not arise in the formulation of the meaning formula
  for $\mname{poly-diff}_{\epsilon \tarrow \epsilon \tarrow
    \epsilon}$.}
  
\noindent
The string of equations
\begin{align} \setcounter{equation}{0}
&
\mname{deriv}_{(\iota \tarrow \iota) \tarrow (\iota \tarrow \iota)}
(\LambdaApp x_\iota \mdot x_{\iota}^{2})\\
&=
\mname{deriv}_{(\iota \tarrow \iota) \tarrow (\iota \tarrow \iota)}
(\sembrack{\synbrack{\LambdaApp x_\iota \mdot x_{\iota}^{2}}}_{\iota \tarrow \iota})\\
&=
\mname{deriv}_{(\iota \tarrow \iota) \tarrow (\iota \tarrow \iota)}
(\sembrack{\mname{abs}_{\epsilon \tarrow  \epsilon \tarrow \epsilon} \,
\synbrack{x_{\iota}} \, \synbrack{x_{\iota}^{2}}}_{\iota \tarrow \iota})\\
&=
\sembrack{\mname{abs}_{\epsilon \tarrow  \epsilon \tarrow \epsilon} \,
\synbrack{x_{\iota}} \, 
(\mname{poly-diff}_{\epsilon \tarrow \epsilon \tarrow \epsilon} \, 
\synbrack{x_{\iota}^{2}} \, \synbrack{x_{\iota}})}_{\iota \tarrow \iota}\\
&=
\sembrack{\mname{abs}_{\epsilon \tarrow  \epsilon \tarrow \epsilon} \,
\synbrack{x_{\iota}} \, \synbrack{2 \ast x_{\iota}}}_{\iota \tarrow \iota}\\
&=
\sembrack{\synbrack{\LambdaApp x_\iota \mdot 2 \ast x_{\iota}}}_{\iota \tarrow \iota}\\
&=
\LambdaApp x_\iota \mdot 2 \ast x_{\iota}
\end{align}
proves (informally) the desired result where the equation given by (3)
and (4) results from instantiating the meaning formula for
$\mname{poly-diff}_{\epsilon \tarrow \epsilon \tarrow \epsilon}$ with
$\synbrack{x_{\iota}}$ and $\synbrack{x_{\iota}^{2}}$.  \esp

%% Transformers, meaning formulas, biform theories
%% Natural number arithmetic
%% Polynomials
%% Symbolic differentiation

\section{A Sketch of a Simple Proof System}

\bsp 
At first glance, it would appear that a proof system for {\churchqe}
could be straightforwardly developed by extending Andrews' proof
system for {\qzero}~\cite[p.~213]{Andrews02}.  We can define
$\mname{is-var}_{\epsilon \tarrow o}$ (and $\mname{is-con}_{\epsilon
  \tarrow o}$ in a similar way) by the axiom schemas
$\mname{is-var}_{\epsilon \tarrow o} \, \synbrack{\textbf{x}_\alpha}$
and $\Neg\mname{is-var}_{\epsilon \tarrow o} \, \textbf{A}_\epsilon$
where $\textbf{A}_\epsilon$ is any construction that is not a quoted
variable.  We can recursively define $\mname{is-expr}_{\epsilon
  \tarrow o}^{\alpha}$ using a set of axioms that say how expressions
are constructed.  We can specify that the type $\epsilon$ of
constructions is an inductive type using a set of axioms that say
(1)~the constructions are distinct from each other and (2)~induction
holds for constructions.  We can specify quotation using the Law of
Quotation $\synbrack{\textbf{A}_\alpha} = \sE(\textbf{A}_\alpha)$
(Theorem~\ref{thm:quotation}).  And we can specify evaluation using
the Law of Disquotation
$\sembrack{\synbrack{\textbf{A}_\alpha}}_\alpha = \textbf{A}_\alpha$
(Theorem~\ref{thm:disquotation}).
\esp

Andrews' proof system with these added axioms would enable simple
theorems involving quotation and evaluation to be proved, but the
proof system would not be able to substitute expressions for free
variables occurring in the argument of an evaluation.  Hence schemas
and meaning formulas could be expressed in {\churchqe}, but they would
be useless because they could not be instantiated. Clearly, a useful
proof system for {\churchqe} requires some form of substitution that
is applicable to evaluations.

Due to the Variable Problem, substitution involving evaluations cannot
be purely syntactic as in a traditional logic.  It must be a
semantics-dependent operation in which side conditions, like whether a
variable is free in an expression, are proved within the proof system.
Since {\churchqe} supports reasoning about syntax, an obvious way
forward is to add to $\sC$ a logical constant $\mname{sub}_{\epsilon
  \tarrow \epsilon \tarrow \epsilon \tarrow \epsilon}$ such that, if
$\textbf{C}_\beta$ is the result of substituting $\textbf{A}_\alpha$
for each free occurrence of $\textbf{x}_\alpha$ in $\textbf{B}_\beta$
without any variable captures, then
\[\mname{sub}_{\epsilon \tarrow \epsilon \tarrow \epsilon \tarrow
  \epsilon} \, \synbrack{\textbf{A}_\alpha} \,
\synbrack{\textbf{x}_\alpha} \, \synbrack{\textbf{B}_\beta} =
\synbrack{\textbf{C}_\beta}.\] $\mname{sub}_{\epsilon \tarrow \epsilon
  \tarrow \epsilon \tarrow \epsilon}$ thus plays the role of an
explicit substitution operator~\cite{AbadiEtAl91}.

This approach, however, does not work in {\churchqe} since
$\textbf{B}_\beta$ may contain evaluations, but quotations in
{\churchqe} may not contain evaluations.  Although the approach does
work in {\qzerouqe}~\cite{FarmerArxiv14} in which quotations in
{\churchqe} may contain evaluations, it is extremely complicated due
to the Evaluation Problem.

A more promising approach is to add some axiom schemas to the five
beta-reduction axiom schemas used by Andrews' in his proof system for
{\qzero}~\cite[p.~213]{Andrews02} that specify beta-reduction of an
application of the form $(\LambdaApp \textbf{x}_\alpha \mdot
\sembrack{\textbf{B}_\epsilon}_\beta) \, \textbf{A}_\alpha$.  But how
do we overcome the Double Substitution Problem?  There seems to be no
easy way of emulating a double substitution with beta-reduction, so
the best approach appears to be to consider only cases that do not
require a second substitution, as formalized by the following axiom
schema:
\begin{align*}
&
(\mname{is-expr}_{\epsilon \tarrow o}^{\beta} \, 
((\LambdaApp \textbf{x}_\alpha \mdot \textbf{B}_\epsilon) \, \textbf{A}_\alpha) \Andd 
\Neg(\mname{is-free-in}_{\epsilon \tarrow \epsilon \tarrow o} \,
\synbrack{\textbf{x}_\alpha} \, ((\LambdaApp \textbf{x}_\alpha \mdot
\textbf{B}_\epsilon) \, \textbf{A}_\alpha))) \Implies {} \\
&
\hspace{4ex}
(\LambdaApp \textbf{x}_\alpha \mdot
\sembrack{\textbf{B}_\epsilon}_\beta) \, \textbf{A}_\alpha =
\sembrack{(\LambdaApp \textbf{x}_\alpha \mdot \textbf{B}_\epsilon) \,
  \textbf{A}_\alpha}_\beta.
\end{align*}
Here $\mname{is-free-in}_{\epsilon \tarrow \epsilon \tarrow o}$ would
be a new logical constant in $\sC$, and the second condition would say
that $\textbf{x}_\alpha$ is not free in the expression whose syntax
tree is represented by $(\LambdaApp \textbf{x}_\alpha \mdot
\textbf{B}_\epsilon) \, \textbf{A}_\alpha$.  As a result, there would
be no free occurrences of $\textbf{x}_\alpha$ in the right-hand side
of the conclusion after the evaluation is eliminated.  Details of this
approach will be given in a future paper that presents the proof
system for {\churchqe} that we have sketched.

\section{Conclusion}

Quotation and evaluation provide a basis for metaprogramming as seen
in Lisp and other programming languages.  We believe that these
mechanisms can also provide a basis for metareasoning in traditional
logics like first-order logic or simple type theory.  However,
incorporating quotation and evaluation into a traditional logic is
much more challenging than incorporating them into a programming
language due to the three problems we described in the Introduction.

In this paper we have introduced {\churchqe}, a logic based on
{\qzero}~\cite{Andrews02}, Andrews' version of Church's type theory,
that includes quotation and evaluation.  We have presented the syntax
and semantics of {\churchqe}, sketched a proof system for it, and
given examples that show the practical benefit of having quotation and
evaluation in a logic.

{\churchqe} is a simpler version of {\qzerouqe}~\cite{FarmerArxiv14},
a richer, but more complicated, version of {\qzero} with
undefinedness, quotation, and evaluation.  In {\qzerouqe}, quotation
may be applied to expressions containing evaluations, expressions may
be undefined and functions may be partial, and substitution is
implemented explicitly as a logical constant.  Allowing quotation to
be applied to all expressions makes {\qzerouqe} much more expressive
than {\churchqe} but also much more difficult to implement since
substitution in the presence of evaluations is highly complex.  We
believe that {\churchqe} would not be hard to implement.  Since it is
a version of Church's type theory, it could be implemented by
extending an implementation of HOL~\cite{GordonMelham93} such as
HOL Light~\cite{Harrison09}.

Our approach for incorporating quotation and evaluation into Church's
type theory --- introducing an inductive type of constructions, a
quotation operator, and a typed evaluation operator --- can be applied
to other logics including many-sorted first-order logic.  We have
shown that developing the needed syntax and semantics is relatively
straightforward, while developing a proof system for the logic is
fraught with difficulties.

\section*{Acknowledgments}

The author thanks the reviewers for their helpful comments and
suggestions.

\bibliography{$HOME/research/lib/imps}%$ 
\bibliographystyle{plain}

\end{document}